\theoremstyle{plain}
\newtheorem{thm}{Theorem}
\newtheorem{lem}{Lemma}
\newtheorem{prb}{Problem}
\theoremstyle{definition}
\theoremstyle{remark}
\title{Minimizing the Maximum Interference is Hard}%
\author{Kevin Buchin\thanks{Department of Mathematics and Computer Science, University of Technology Eindhoven, {\tt kbuchin@tue.nl}. I would like to thank Roger Wattenhofer and Yuezhou Lv for helpful comments.}}%
\date{}
\begin{document}

\maketitle

\begin{abstract}
We consider the following interference model for wireless sensor and ad hoc networks:
the receiver interference of a node is the number of transmission ranges it lies in.
We model transmission ranges as disks. For this case we show that
choosing transmission radii which
minimize the maximum interference while maintaining a connected symmetric
communication graph is NP-complete.
\end{abstract}

\section{Introduction}
Limiting the interference between nodes in a sensor network is substantial
for the energy-efficiency of the network. A common approach to reduce
interference is \emph{topology control}, i.e., restricting the communication
graph (see~\cite{bb-tc-07,k-isnr-07}). A theoretical problem in topology control which has been stated as
essential to understanding sensor networks is the following.

\begin{prb}[Locher, von Rickenbach, Wattenhofer~\cite{lrw-sncp-08}]\label{prb:tc}
Given $n$ nodes in the plane. Connect the nodes by a spanning tree.
For each node $v$ we construct a disk centering at $v$ with radius equal
to the distance to $v$'s furthest neighbor in the spanning tree. The interference
of a node $v$ is then defined as the number of disks that include node $v$
(not counting the disk of $v$ itself). Find
a spanning tree that minimizes the maximum interference.
\end{prb}
The choice of the radii as given in the problem statement guarantees that
the symmetric communication graph contains a spanning tree, i.e., that
the symmetric communication graph is connected.
The \emph{symmetric communication graph} is the undirected graph on the nodes
with edges between nodes which both lie in each others
\emph{transmission ranges}, i.e., in each others circles. We refer to the
radii of the circles as \emph{transmission radii}.

We prove that Problem~\ref{prb:tc} is NP-hard.
So far no lower bounds for the problem were known. Halld{\'o}rsson and
Tokuyama~\cite{ht-miwan-06} give an algorithm which yields a
maximum interference in $O(\sqrt{n})$.
An open problem that remains is to narrow the gap between this upper bound and
our lower bound.
For the case of points on a line there is a
$\sqrt[4]{n}$-approximation algorithm~\cite{rswz-rim-05}.
In a generalized version of the problem there is a positive real value
associated with each (ordered) pair of nodes, and the first node can send a message
to the second node (but will also interfere with it) if its transmission power is above this
value. In this version an approximation within less than a logarithmic factor in polynomial
time is not possible unless NP has slightly superpolynomial time algorithms~\cite{bp-cmi-06}.

\section{NP-Completeness}
In this section we prove that deciding whether the maximum interference
of a network is at most $3$ is NP-complete. Strictly speaking, this
implies that the interference in Problem~\ref{prb:tc} cannot be
approximated within a factor less than $4/3$ efficiently, since
it is not possible to distinguish between interference $3$ and $4$ in
polynomial time unless $P=NP$.

We prove the NP-hardness by a polynomial reduction from the problem of finding a \emph{Hamilton path} in a
\emph{grid graph} of maximum degree $3$. A (vertex-induced) grid graph is a graph for which the vertex set is a finite subset of
the two-dimensional integer grid $\mathbb{Z} \times \mathbb{Z}$ and there is an edge between two
vertices $x,y$ exactly if $x,y$ are neighbors on the grid, i.e., $\|x-y\| = 1$. We identify the corresponding
edge with the line segment from $x$ to $y$.
A Hamilton path in
a graph is a path in the graph with every vertex lying exactly once on the path. Deciding whether a Hamilton
path in a grid graph with maximum degree $3$ exists is NP-hard~\cite{pv-tgp-84}.

For the reduction we need for any grid graph of maximum degree $3$ a polynomial construction of a set of nodes such that
there is a Hamilton path in the grid graph exactly if there is a spanning tree with maximum interference at most $3$.
We may assume that the grid graph has no isolated vertex because in that case there is no Hamilton path and
we can check this in linear time.

A vertex $x\in\mathbb{Z}\times\mathbb{Z}$ of the grid graph is represented by a set of nodes
(which we call a \emph{vertex gadget}) containing the following nodes:
\begin{itemize}
\item a \emph{center node}: a node at position $x$,
\item \emph{satellite nodes}: three further nodes at three disjoint positions from the set $\{x\pm (0,1/4),\ x\pm (1/4,0)\}$. The
satellites are chosen such that the vertex gadget has a satellite node on every edge at $x$ of the
grid graph. For every vertex of degree less than 3, the remaining satellite is placed such that the distance between the remaining satellites of neighboring degree-2 vertices is larger than 1. This can be achieved by greedily placing the remaining satellites for each chain of degree-2 vertices.
\end{itemize}
Two satellites (from different vertex gadgets) on the same edge of the grid graph are called
\emph{partners}. Figure~\ref{fig:gadget} shows a grid graph of maximum degree $3$ and a
corresponding node set.
\begin{figure}
  \centering
  \includegraphics[width=0.8\textwidth]{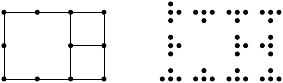}
  \caption{Example of a grid graph and a corresponding set of nodes.}
  \label{fig:gadget}
\end{figure}

For the NP-hardness we need to prove that the grid graph has a Hamilton path exactly if
the corresponding node set has a spanning tree with interference at most $3$.
We get one of the implications by constructing such a tree from a Hamilton path.

\begin{lem}\label{lem:HPtoST}
If a grid graph has a Hamilton path then the corresponding set of nodes
has a spanning tree with interference $3$.
\end{lem}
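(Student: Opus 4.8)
The plan is to turn a Hamilton path $v_1,v_2,\dots,v_n$ of the grid graph into an explicit spanning tree $T$ of the node set and then verify, node by node, that the induced transmission radii give interference at most $3$. For the tree I would use two kinds of edges. Inside every vertex gadget I connect the center node to each of its three satellites, making the gadget a star on four nodes. Then, for every edge $\{v_i,v_{i+1}\}$ of the Hamilton path I connect the two partner satellites lying on that grid edge; since the gadget for $v_i$ has a satellite on each incident grid edge, these partner connections are always available. I will call a satellite \emph{active} if it lies on a Hamilton-path edge, and \emph{inactive} otherwise.

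First I would check that $T$ is a spanning tree. The node set has $n$ centers and $3n$ satellites, hence $4n$ nodes; the stars contribute $3n$ edges and the path contributes $n-1$ partner edges, giving $4n-1$ edges. Connectivity is immediate, since each star is connected and the partner edges link consecutive gadgets along the Hamilton path, so everything lies in one component; a connected graph on $4n$ nodes with $4n-1$ edges is a tree. Next I would record the transmission radii. A center's only tree-neighbors are its three satellites at distance $1/4$, so its radius is $1/4$. An active satellite has its center (distance $1/4$) and its partner (distance $1/2$) as tree-neighbors, so its radius is $1/2$; every inactive satellite has only its center as a tree-neighbor, so its radius is $1/4$.

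The heart of the argument is a purely local interference count: since every radius is at most $1/2$, only nodes within distance $1/2$ of a given node can cover it. For a center, the only such nodes are its own three satellites (all at distance $1/4$), so its interference is exactly $3$. For a satellite $s$ I would enumerate the at most four nodes within distance $1/2$: its center (distance $1/4$), its two siblings (distance $\tfrac{\sqrt2}{4}$ if perpendicular, $1/2$ if opposite), and possibly its partner (distance $1/2$). The center always covers $s$; an active sibling or active partner (radius $1/2$) always covers $s$, while an inactive one (radius $1/4$) never does. The decisive structural fact is that a Hamilton path uses at most two edges at each vertex, so at most two satellites per gadget are active, and the partner of an active satellite is active too. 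Hence if $s$ is active, at most one sibling is active, giving at most $1+1+1=3$; if $s$ is inactive its partner contributes nothing and at most two siblings are active, giving at most $1+2=3$.

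The step I expect to be the main obstacle is pinning down the geometry of the satellite case precisely: I must confirm exactly which nodes fall within distance $1/2$ of $s$, in particular checking that an adjacent center (distance $3/4$) and every non-partner satellite of a neighbouring gadget (distance $\tfrac{\sqrt{10}}{4}>1/2$) are strictly too far to cover $s$ even at radius $1/2$, so that only the center, the two siblings, and the partner can ever contribute. Coupling this distance bookkeeping with the active/inactive classification and the path-degree bound of two active satellites per gadget is what makes the count collapse to $\le 3$ uniformly for centers and for both types of satellites.
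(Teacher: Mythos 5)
Your proposal is correct and takes essentially the same route as the paper: the identical star-plus-partner-edge tree, the identical radius assignment ($1/2$ for satellites on Hamilton-path edges, $1/4$ for centers and all other satellites), and the same counting argument driven by the fact that a Hamilton path uses at most two edges per vertex, so at most two satellites per gadget are active. The extra bookkeeping you add (the $4n-1$ edge count certifying the tree property, and the explicit bounds $3/4$ and $\sqrt{10}/4 > 1/2$ ruling out interference from neighbouring gadgets) merely makes explicit what the paper's proof states more tersely as ``all other nodes have distance at least $3/4$.''
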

\begin{proof}
Given a grid graph with Hamilton path we can construct a spanning tree
with interference $3$ in the following way: For an arbitrary Hamilton path
\begin{itemize}
\item connect each center node to its satellites,
\item connect satellite partners if they lie on an edge of the Hamilton
cycle.
\end{itemize}
Center nodes and satellite nodes without a partner have transmission radius
$1/4$, while satellites with partners have transmission radius $1/2$.

This yields the following interferences:
A center node is in the transmission range of its satellites. It
is not in the transmission range of any other node since it has distance
at least $3/4$ to any other node. Thus the interference at a center node is
$3$.

A satellite is in the transmission range of the center node.
It is in the transmission range of any (other) satellite in its vertex gadget that
connects to a partner. If it connects to its partner, it is in the
transmission range of the partner. There can be no further interference
at a satellite since all other nodes have distance at least $3/4$ to the
satellite.

In a (Hamilton) path every vertex of the
grid graph is connected to at most two other vertices. Therefore in a
vertex gadget at most two satellites connect to their partners. This
yields an interference of at most 3 at satellites.
\end{proof}

Next we show that if the
interference induced by a spanning is at most 3
then in the spanning tree
vertex gadgets may only connect through partners.
 \begin{lem}\label{lem:partners}
Assume a grid graph has no isolated vertices. If a spanning tree on the corresponding set
of nodes has an edge between two different vertex gadgets other than an edge between
partners then there is a node with interference at least $4$.
\end{lem}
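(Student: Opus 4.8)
The plan is to show that any inter-gadget tree edge that is not a partner edge is so long that both of its endpoints acquire a transmission radius of at least $3/4$, and then that such a radius is always enough to force some node to interference $4$.

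First I would classify the distances between nodes belonging to two distinct gadgets. Partners sit at distance exactly $1/2$, and for every other pair of nodes from different gadgets the distance is at least $3/4$: two centers are at distance $\ge 1$, a center and the nearest satellite of an adjacent gadget are at distance exactly $3/4$, two non-partner satellites of adjacent gadgets are at distance $\ge \sqrt{10}/4 > 3/4$, and any two nodes of non-adjacent gadgets are farther still. Hence if a spanning-tree edge joins two different gadgets but is not an edge between partners, its length is at least $3/4$, so (the radius being the distance to the furthest tree-neighbour) both of its endpoints have transmission radius at least $3/4$.

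Second I would record a baseline fact about centers. Since the tree is connected and spanning and, as the graph has no isolated vertex, has at least two nodes, every node has a tree-neighbour; because the closest node to any satellite is its own center, at distance $1/4$, every satellite has radius at least $1/4$ and therefore contains its center in its disk. Thus every center is covered by all three of its own satellites and already has interference at least $3$, so to reach interference $4$ it suffices to exhibit a single further disk containing it.

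Finally I would manufacture that extra disk from the long edge. If one endpoint of the non-partner edge is a center $c$, then the other endpoint $w$ satisfies $r_w \ge \|c-w\|$, so $D_w$ contains $c$; since $w$ lies in a foreign gadget it is distinct from $c$'s three satellites, giving interference $4$ at $c$. If both endpoints are satellites $s$ (in gadget $X$) and $t$ (in gadget $Y$), each has radius at least $3/4$, and a satellite whose offset points toward an adjacent gadget is at distance exactly $3/4$ from that gadget's center, so its disk contains that center and raises it to interference $4$. I expect the main obstacle to be precisely this all-satellite case: one must argue that at least one of $s,t$ actually faces an existing neighbouring center, i.e.\ pin down the concrete node on which the fourth covering disk lands. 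This is exactly where the grid adjacency structure and the tight value $3/4$ enter, and the center/near-satellite configuration that realizes the minimum distance $3/4$ is the delicate case the whole gadget is tuned to.
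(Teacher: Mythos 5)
Your distance classification, your baseline observation that every center automatically sits in all three of its satellites' disks (so interference $3$ is free there), and your treatment of any non-partner edge having a center as an endpoint are all correct---indeed, on that last case you are more careful than the paper itself, whose proof only ever speaks of a satellite connecting to a node beyond its partner and never addresses an edge between two centers. But the all-satellite case that you explicitly leave open is a genuine gap, not a routine verification: the claim you say one must still establish (``at least one of $s,t$ actually faces an existing neighbouring center'') is precisely the assertion that the paper's own proof makes with no justification whatsoever (``it contains at least one center node outside of its vertex gadget in its transmission range''), and that assertion is false.

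Concretely, take the grid graph with exactly the two vertices $(0,0)$ and $(1,0)$, satellites at $(1/4,0),(0,1/4),(0,-1/4)$ and at $(3/4,0),(1,1/4),(1,-1/4)$ (a legitimate choice, since the construction lets the free satellites be placed arbitrarily), and the spanning tree consisting of the six center--satellite edges plus the inter-gadget edge between $(0,1/4)$ and $(1,1/4)$. That edge is not a partner edge; it gives the two linked satellites radius $1$ while every other node keeps radius $1/4$. Neither radius-$1$ disk reaches the foreign center, because the distance from $(0,1/4)$ to $(1,0)$ is $\sqrt{17}/4>1$. Checking all eight nodes: each center is covered exactly by its own three satellites; the satellites $(1/4,0)$ and $(3/4,0)$ are covered by their own center and by the two radius-$1$ disks (e.g.\ $(3/4,0)$ is at distance $\sqrt{2}/4$ and $\sqrt{10}/4$ from them); every remaining node lies in at most two disks. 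The maximum interference is $3$, so the lemma as stated is false. The configuration you flagged as ``the delicate case the whole gadget is tuned to''---both satellites perpendicular to the connecting segment, on the same side, facing unoccupied grid positions---is thus not merely delicate but a counterexample: neither your argument nor the paper's can be completed without changing the statement or the construction (e.g.\ prescribing where the free satellites go, or weakening the lemma to what Lemma~\ref{lem:STtoHP}, which invokes it, actually needs). Your instinct about where the difficulty sits was exactly right; the paper's proof glosses over that same spot.
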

\begin{proof}
Suppose a satellite connects to a node which is further away than its partner. In this
case it contains at least one center node outside of its vertex gadget in its
transmission range.
Since this center node will also lie in the transmission ranges of its satellites, it
will have interference at least $4$.
\end{proof}

\begin{lem}\label{lem:STtoHP}
If the node set corresponding to a grid graph without isolated vertices has
a spanning tree with interference at most $3$ then the grid graph has a
Hamilton path.
\end{lem}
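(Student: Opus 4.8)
The plan is to translate the spanning tree $T$ into a structure on the grid graph itself. I would define an auxiliary graph $H$ whose vertices are the grid vertices (equivalently, the vertex gadgets) and whose edges are exactly those partner edges that $T$ uses. Since every partner edge lies on an edge of the grid graph, $H$ is a subgraph of the grid graph, and the goal becomes showing that $H$ is a Hamilton path or a Hamilton cycle; in either case the grid graph has a Hamilton path, which is what we want.

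First I would record two global properties of $H$. By the contrapositive of Lemma~\ref{lem:partners}, interference at most $3$ forces every edge of $T$ joining two different gadgets to be a partner edge, so all remaining edges of $T$ stay inside a single gadget. Because $T$ is connected and spans every node (including every center), any $T$-path between nodes of two different gadgets leaves and enters gadgets only across partner edges, so its projection is a walk in $H$. Hence $H$ is connected and covers all $n$ grid vertices. Moreover $H$ is simple, since two adjacent gadgets share exactly one grid edge and therefore at most one partner edge.

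The crux is to bound the degree of every vertex of $H$ by $2$, i.e.\ to show that in each gadget at most two satellites carry a partner edge. I would argue this from interference. A satellite that uses a partner edge has its partner, at distance $1/2$, as a neighbor in $T$, so its transmission radius is at least $1/2$; likewise every node has radius at least $1/4$, because the smallest distance between two distinct nodes is $1/4$ (center to own satellite). Suppose, for contradiction, that all three satellites of some gadget carry partner edges, and let $s$ be one of them. Then $s$ lies in the disk of its own center (distance $1/4\le 1/4$), in the disk of its partner (distance $1/2\le 1/2$), and in the disks of the other two satellites of the gadget, since any two satellites of one gadget are at distance $\sqrt2/4$ or $1/2$, both at most $1/2$, and those satellites also have radius at least $1/2$. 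These four disks belong to four distinct nodes, so $s$ would have interference at least $4$, a contradiction. Therefore each gadget carries at most two partner edges, and $H$ has maximum degree at most $2$.

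Finally I would assemble these facts: a connected graph on $n$ vertices with maximum degree at most $2$ is a single path or a single cycle, so $H$ is a Hamilton path or a Hamilton cycle of the grid graph, and since a Hamilton cycle contains a Hamilton path, the grid graph has a Hamilton path in every case. I expect the degree bound to be the main obstacle: it is the only place where the quantitative geometry of the gadget is essential (three partnered satellites together with the center and one partner already force four overlapping disks), whereas the connectivity of $H$ and the path-or-cycle dichotomy are then routine.
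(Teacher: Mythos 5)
Your proposal is correct and follows essentially the same route as the paper: you invoke Lemma~\ref{lem:partners} to restrict inter-gadget edges to partner edges, rule out three partnered satellites in a single gadget by exactly the paper's interference-$4$ argument (center, partner, and the two other satellites of the gadget), and read off a Hamilton path from the projected graph. The only difference is that you spell out details the paper leaves implicit --- the radius lower bounds, the connectivity of the projection, and the possibility that the projection is a Hamilton cycle rather than a path --- which is added rigor, not a different approach.
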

\begin{proof}
Suppose we have a spanning tree in which from each vertex gadget at most two satellites
connect to partners. Then this directly gives us a Hamilton path in the corresponding
grid graph by simply connecting the vertices in the same way as the vertex gadgets.

Now assume there is a spanning tree with interference at most $3$ which is not
of this type. The spanning tree has a vertex gadget that connects to at least three
other vertex gadgets and by Lemma~\ref{lem:partners} these must be connections from
satellites to their partners. Thus all three satellites in the vertex gadget connect to their
partners. Now all three satellites lie in the transmission range of their partner, of the
other two satellites, and of the center node of the gadget. Therefore, the satellites have interference
at least $4$ contradicting the assumption of interference $3$.
\end{proof}

\begin{thm}\label{thm:nphard}
Deciding whether a set of nodes in the plane has a spanning tree with
interference at most $3$ is NP-complete.
\end{thm}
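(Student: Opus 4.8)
The plan is to prove the two halves of NP-completeness separately: first membership in NP, then NP-hardness by the reduction from Hamilton path in degree-$3$ grid graphs that the gadget construction was set up for. The conceptual content is already packaged in Lemmas~\ref{lem:HPtoST}--\ref{lem:STtoHP}, so the theorem is mostly a matter of assembling these pieces and checking that the reduction is polynomial.

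For membership in NP I would take the spanning tree itself as the certificate, presented as its list of $n-1$ edges. Verification runs in polynomial time: one checks that the edges form a tree spanning all nodes, then computes each node's transmission radius as the distance to its furthest tree-neighbor, and finally, for every node $v$, counts how many of the other disks contain $v$. The instance is accepted exactly when all $n$ of these counts are at most $3$, and every step here is clearly polynomial in $n$.

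For NP-hardness I would reduce from deciding whether a grid graph of maximum degree $3$ has a Hamilton path, which is NP-hard by~\cite{pv-tgp-84}. Given such a graph $G$, I first test for isolated vertices; assuming $G$ has at least two vertices, an isolated vertex forces a "no" answer, so I map those instances to a fixed negative instance of the interference problem. Otherwise I build the node set from the vertex gadgets as described above, where each gadget contributes four nodes at coordinates that are rational with denominator $4$; hence the output has size linear in $|V(G)|$ and is computable in polynomial time. The correctness of the reduction is exactly the equivalence ``$G$ has a Hamilton path $\iff$ the node set admits a spanning tree of interference at most $3$'': the forward direction is Lemma~\ref{lem:HPtoST} (which even yields interference exactly $3$), and the reverse direction is Lemma~\ref{lem:STtoHP}, which applies precisely because the preprocessing has removed isolated vertices. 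Together with NP membership this proves the theorem.

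I do not expect a genuine obstacle here, since the hard work lives in the gadget and in Lemmas~\ref{lem:HPtoST}--\ref{lem:STtoHP}. The only points deserving care are confirming that the construction is truly polynomial (bounded-size gadgets with rational coordinates) and handling the isolated-vertex case cleanly, so that Lemma~\ref{lem:STtoHP}'s hypothesis is met; the interference-verification routine for NP membership must also correctly recompute the transmission radii from the tree rather than assume them.
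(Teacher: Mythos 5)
Your proposal is correct and follows essentially the same route as the paper: NP membership via checking a given spanning tree's interference with polynomially many in-circle tests, and NP-hardness by combining the polynomial gadget construction with Lemmas~\ref{lem:HPtoST} and~\ref{lem:STtoHP}, after dismissing grid graphs with isolated vertices. Your treatment is somewhat more explicit about the certificate verification and the isolated-vertex preprocessing, but these are elaborations of the paper's argument rather than a different approach.
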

\begin{proof}
The polynomial construction of the node set from the grid graph together
with Lemmas~\ref{lem:HPtoST} and~\ref{lem:STtoHP} directly yield the NP-hardness.

To verify whether a spanning tree has a certain interference
it suffices to perform
$n \choose 2$ in-circle tests. Thus, the problem is in NP.
\end{proof}

%\paragraph*{Acknowledgements.}
%I would like to thank Roger Wattenhofer and Yuezhou Lv for helpful comments.

{\small

}

\end{document}